\documentclass[AMA,STIX1COL,doublespace]{WileyNJD-v2}

\usepackage{graphicx}
\usepackage{caption}
\usepackage{subcaption}
\usepackage{footnote}
\captionsetup[sub]{font=normalsize}


\newcommand*\xbar[1]{%
	\hbox{%
		\vbox{%
			\hrule height 0.5pt 
			\kern0.5ex
			\hbox{%
				\kern-0.1em
				\ensuremath{#1}%
				\kern-0.1em
			}%
		}%
	}%
}

\articletype{RESEARCH ARTICLE}%

\received{29 January 2021}
\revised{14 April 2021}
\accepted{27 May 2021}

\raggedbottom

\begin{document}

\title{A practical Response Adaptive Block Randomization (RABR) design with analytic type I error protection}

\author[1]{Tianyu Zhan*}

\author[2]{Lu Cui}

\author[1]{Ziqian Geng}

\author[3]{Lanju Zhang}

\author[1]{Yihua Gu}

\author[4]{Ivan S.F. Chan}

\authormark{Zhan \textsc{et al}}

\address[1]{\orgdiv{Data and Statistical Sciences}, \orgname{AbbVie Inc.}, North Chicago, \orgaddress{\state{Illinois}, \country{USA}}}

\address[2]{\orgdiv{Statistical Sciences and Innovation}, \orgname{UCB Biosciences Inc.}, Raleigh, \orgaddress{\state{North Carolina}, \country{USA}}}

\address[3]{\orgdiv{Biometrics, Vertex Pharmaceuticals, Boston, Massachusetts, USA}}

\address[4]{\orgdiv{Global Biometrics and Data Sciences, Bristol Myers Squibb, Berkeley Heights, New Jersey, USA}}

\corres{*Tianyu Zhan, 1 Waukegan Rd, North Chicago, IL 60064, USA.  \email{tianyu.zhan.stats@gmail.com}}


\abstract[Summary]{Response adaptive randomization (RAR) is appealing from methodological, ethical, and pragmatic perspectives in the sense that subjects are more likely to be randomized to better performing treatment groups based on accumulating data. However, applications of RAR in confirmatory drug clinical trials with multiple active arms are limited largely due to its complexity, and lack of control of randomization ratios to different treatment groups. To address the aforementioned issues, we propose a Response Adaptive Block Randomization (RABR) design allowing arbitrarily pre-specified randomization ratios for the control and high-performing groups to meet clinical trial objectives. We show the validity of the conventional unweighted test in RABR with a controlled type I error rate based on the weighted combination test for sample size adaptive design invoking no large sample approximation. The advantages of the proposed RABR in terms of robustly reaching target final sample size to meet regulatory requirements and increasing statistical power as compared with the popular Doubly Adaptive Biased Coin Design (DBCD) are demonstrated by statistical simulations and a practical clinical trial design example.}

\keywords{confirmatory adaptive design; multi-arm studies; sample size; type I error rate control; unweighted statistics}


\maketitle


\section{Introduction}

Randomized clinical trials (RCTs) remain the gold standard for understanding the effect of a treatment or other intervention relative to placebo or standard of care. \citep{dia2015, wu2017, bar2018} To make the trial more efficient and ethical, confirmatory adaptive designs become more widely acceptable and popular to allow flexible interim modifications in the ongoing trials using unblinded data without compromising the type I error rate. \citep{fda2019, usc2016, bre2009} An appealing branch is the response adaptive randomization (RAR) design, where the chance of a newly enrolled subject being assigned to a treatment arm varies over the course of the trial based on accumulating outcome data for subjects previously enrolled. \citep{fda2019} Statistical, ethical, and pragmatic rationales support the advantage of using RAR as more subjects are being assigned to the more promising treatment arms. As a popular method of RAR, Doubly Adaptive Biased Coin Design (DBCD) accounts for both the current sample proportion and a desired allocation function. \citep{eis1994, hu2006, wil2020} Covariate-adjusted RAR designs are also receiving more attention recently. \citep{hu2015, zhu2018} 



However, in industry-sponsored confirmatory clinical studies to evaluate drug efficacy and safety, randomization ratios and targeted sample sizes of treatment groups are typically pre-specified in study protocols prior to conducting the clinical trial to meet regulatory requirements. \cite{fda1996} As demonstrated in Section \ref{s:sim} on simulations studies, although DBCD allows one to adjust tuning parameters in allocation functions based on assumptions at the study design stage, DBCD may potentially miss an agreed randomization target if observed data deviate. This is more challenging when there is limited prior knowledge on the efficacy of treatment regimes, as in the case study we considered in Section \ref{s:real}.



To address this challenge, we propose a Response Adaptive Block Randomization (RABR) design motived by Wang and Cui, \cite{wang2007} and Cui et al., \cite{cui2020} where subjects are adaptively randomized to different treatment groups based on the order of their standardized effect sizes and a pre-specified block randomization vector. As shown in simulation studies in Section \ref{s:sim} and the case study in Section \ref{s:real}, our RABR is more robust than DBCD in achieving a desired final sample size per group under varying underlying treatment effects. By effectively increasing the sample size of the selected treatment group to target levels, and decreasing the number of patients randomized to worse performing groups, our proposed RABR also has a higher power of detecting a significant treatment effect than DBCD. Moreover, the built-in block randomization feature of RABR facilitates its implementation in practice, in the sense that Interactive Response Technology (IRT) schedules can be specified in advance. While the application of RAR in blocks has been studied in some previous articles in the context of two-group comparisons with a binary endpoint, \citep{mag2011, kar2003} our approach is proposed to accommodate multi-arm trials with broader data types and to achieve targeted final sample size for the selected treatment group. Another contribution of our work is to analytically prove that the one-sided type I error rate for pairwise comparison is controlled at a nominal significance level by using the usual unweighted test statistic, which does not rely on large sample approximation despite the data-driven randomization process. The unweighted statistics are easier to compute than weighted statistics, \citep{cui2020, cui1997, cui1999} and they are a good alternative to the weighted combination test \citep{cui2020} particularly when multiple treatment groups are presented and complicated multiplicity adjustment is needed.



The remainder of this article is organized as follows. In Section \ref{s:dbcd}, we review the DBCD with a continuous endpoint, which will be used to compare with the proposed design and analysis. In Section \ref{s:design}, we introduce our proposed RABR. The analytic type I error rate protection in pairwise comparison is proved in Section \ref{s:inf}. Simulations are performed to evaluate type I error and power under various scenarios in Section \ref{s:sim}. As a practical example, we further re-design a confirmatory trial for treating tuberous sclerosis complex with RABR at Section \ref{s:real}. Concluding remarks are provided in Section \ref{s:dis}. 

\section{Review of the Doubly Adaptive Biased Coin Designs (DBCD)}

\label{s:dbcd}

In this section, we review the Doubly Adaptive Biased Coin Designs (DBCD). \cite{hu2006} Let us consider a clinical trial with a continuous endpoint, where the response $X^{\left(g\right)}_i$ for subject $i$ in treatment group $g$ follows independent normal distribution with mean $\mu_g$ and variance $\sigma_g^2$,
\begin{equation}
	\label{notation_x_normal}
	X^{\left(g\right)}_i \sim \mathcal{N}\left(\mu_g, \sigma_g^2\right),
\end{equation}
for $g=0$ labeling placebo and $g=1, \ldots, m$ for $m$ treatment groups. We assume that a higher response is favorable, and the response of each subject is observed right after its enrollment in the study. 

After a burn-in period where $M$ subjects have been equally randomized to $m+1$ groups, the study adjusts the randomization probability to each treatment group $g$ ($g = 0, 1, \ldots, m$) based on the following allocation function:
\begin{align}
	& a_g(\eta, 0, \tau_g) = 1, \quad a_g(\eta, 1, \tau_g) = 0, \nonumber \\
	& a_g(\eta, \theta_g, \tau_g) = \left[\tau_g\left(\frac{\tau_g}{\theta_g}\right)^\eta\right]\Bigg/\left[\sum_{j=0}^m \tau_j\left(\frac{\tau_j}{\theta_j}\right)^\eta\right] \label{a_g_original},
\end{align}
where $\eta$ is a non-negative tuning parameter controlling randomness of randomization probabilities. The asymptotic allocation variance is a monotone decreasing function of $\eta$, and usually $\eta=2$ is chosen in practice. \citep{hu2006} In $a_g(\eta, \theta_g, \tau_g)$, $\theta_g$ is the proportion of subjects that have been randomized to group $g$ so far, and $\tau_g$ is the targeting allocation probability for group $g$. In this article, we consider $\tau_g$ in the following form:
\begin{equation}
	\label{y_g_original}
	\tau_g = \left(\frac{1}{\sigma_g}\left[\Phi\left(\frac{\mu_g-\lambda}{\sigma_g}\right)\right]^{1/2}\right)\Bigg/\left(\sum_{j=0}^m \frac{1}{\sigma_j} \left[\Phi\left(\frac{\mu_j-\lambda}{\sigma_j}\right)\right]^{1/2}\right),
\end{equation} 
which is a direct generalization of the allocation function with a continuous endpoint of treatment failure in two-group comparison \citep{bis2004} to a study with multiple treatment groups. The tuning parameter $\lambda$ needs to be pre-specified. 

We further use $\widehat{\tau_g}$ to denote an estimate of $\tau_g$ by replacing $\mu_g$ and $\sigma_g$ by their consistent estimators. At each interim checkpoint, the adaptive randomization probability for treatment group $g$ is $a_g(\eta, \theta_g, \widehat{\tau_g})$. In the context of two-group comparison, it can be shown that $\widehat{\tau_g}$ and $\theta_g$ converge to $\tau_g$ almost surely, and with an asymptotic bivariate normal distribution under certain regulatory conditions. \citep{hu2006} When the sample size is large enough, the study reaches stabilized randomization probabilities to each group $g$ at $\tau_g$. Therefore, asymptotically, a level $\alpha$ test for a non-adaptive design is also valid to control the pairwise comparison error rate at $\alpha$. One can further use certain multiple test procedures to control the family-wise error rate (FWER).

Challenges exist for this type of design to robustly achieve a desired sample size for the selected arm and placebo with a given $\lambda$ under different underlying responses. This issue is further studied in Section \ref{s:sim}. Moreover, block randomization is generally not available given the functional form of $a_g$ in (\ref{a_g_original}). The trial may end up with an undesired final proportion per group, especially when the total sample size is relatively small. In the next section, we introduce our proposed Response Adaptive Block Randomization (RABR) Design as an alternative option along with an analytic type I error protection. 

\section{Response Adaptive Block Randomization (RABR) Design}

\label{s:design}

%

As a starting point, let us consider a simple two-stage study design with one interim adaptation on the randomization. The generalization to multiple stages is illustrated in the following Theorem \ref{theorem_error_multiple}. In stage 1, there are $N_1^{\left(g\right)}$ subjects that have been recruited to group $g$ with sample mean $\xbar{X}_1^{\left(g\right)} = \sum_{i=1}^{N_1^{\left(g\right)}} X_i^{\left(g\right)} / N_1^{\left(g\right)}$, for $g=0, 1, \ldots, m$. Further denote $\boldsymbol{X}_1^{\left(g\right)} = \left\{X_1^{\left(g\right)}, \ldots, X_{N_1^{\left(g\right)}}^{\left(g\right)}\right\}$ as the vector of responses in group $g$ at stage 1. Similar to the DBCD as described in Section \ref{s:dbcd}, we utilize stage 1 as a burn-in period, where equal randomization is adopted on the placebo and $m$ treatment group: $N_1^{\left(0\right)} = N_1^{\left(1\right)} = \cdots = N_1^{\left(m\right)}$. The burn-in size $M$ is equal to $\sum_{g=0}^m N_1^{\left(g\right)}$. Correspondingly, $N_2^{\left(g\right)}$ and $\xbar{X}_2^{\left(g\right)}$ are notations for sample size and sample mean in group $g$ at stage 2. We use $\xbar{X}^{\left(g\right)}$ and $N^{\left(g\right)}$ to denote the sample mean and sample size of group $g$ in the whole study, for $g = 0, 1, \cdots, m$. In this article, we consider studies with either $m=2$ or $3$ active treatment groups with generalization to $m>3$, as briefly discussed in Section \ref{s:dis}. 

Having observed the first stage data $\boldsymbol{D}_1 = \left\{\boldsymbol{X}_1^{\left(0\right)}, \boldsymbol{X}_1^{\left(1\right)}, \ldots, \boldsymbol{X}_1^{\left(m\right)}\right\}$, we adjust the randomization ratios of $m$ treatment groups in a data-driven manner by assigning more subjects to the more promising dose(s). Let vector $\boldsymbol{r} = (r_0, r_1, r_2, \ldots, r_m)$ denote a pre-specified block randomization vector for our design, where $r_g$ are integers satisfying $r_1 \geq r_2 \geq \cdots \geq r_m$ for $g = 1, \ldots, m$. The block size $B$ is equal to $\sum_{g=0}^m r_g$. The first element $r_0$ in $\boldsymbol{r}$ specifies the number of subjects being randomized to the placebo group in one block. Therefore, the randomization probability to the placebo is fixed at $r_0$ out of $B$. The rest of the $m$ elements in $\boldsymbol{r}:$ $r_1, \ldots, r_m$ are assigned to $m$ treatment groups based on a decreasing order of the standardized response denoted as $R_g$,
\begin{equation}
	\label{notation_measure}
	R_g = \left\{N_1^{\left(g\right)}\right\}^{1/2} \: \xbar{X}_1^{\left(g\right)} /\widehat{\sigma}_g, 
\end{equation}
where $\widehat{\sigma}_g$ is the sample standard deviation of data $\boldsymbol{X}_1^{\left(g\right)}$, for $g = 1, \ldots, m$. That is to say, the group with the largest standardized observed mean in the first stage gets the highest probability of being randomized in the second stage ($r_1$ out of $B$). The functional form in (\ref{notation_measure}) is appealing as it is on the same scale of the final test statistics, and it also eases our analytic proof of pairwise type I error rate control at Section \ref{s:con_error}. Even if $\boldsymbol{r}$ is pre-specified, its assignment to each treatment group $g$, $g=1, \ldots, m$ is dependent on the previous stage data $\boldsymbol{D}_1$. 

For demonstration, let's consider a study with $m=2$ active doses with $g = 0, 1, 2$ denoting the placebo, the low dose and the high dose group, respectively. Without loss of generality, we focus on the low dose $g=1$ for illustration. We use $N_2^{\left(1\right)}(\boldsymbol{D}_1)$ to denote the sample size in treatment group $g=1$ at the second stage. It takes the value $\beta_1 N_2^{\left(0\right)}$ if $\left\{N_1^{\left(1\right)}\right\}^{1/2} \:\: \xbar{X}_1^{\left(1\right)} / \widehat{\sigma}_1 \geq \left\{N_1^{\left(2\right)}\right\}^{1/2} \:\: \xbar{X}_1^{\left(2\right)} / \widehat{\sigma}_2$, and $\beta_2 N_2^{\left(0\right)}$ otherwise, where $\beta_1 = r_1/r_0$ and $\beta_2 = r_2 / r_0$. Essentially, subjects are more likely to be assigned to the low dose in the second stage if its response is more promising than that in the high dose based on $\boldsymbol{D}_1$. If $\beta_1 = \beta_2$, then it degenerates to a non-adaptive design with equal randomization. We simplify the notation of $N_2^{\left(1\right)}(\boldsymbol{D}_1)$ as $N_2^{\left(1\right)}$. 

To illustrate randomization procedures in RABR following some related discussion in Cui et al.,\cite{cui2020} we consider a study to evaluate low dose $L$ and high dose $H$ versus placebo $P$ with $\boldsymbol{r} = (2, 3, 1)$ of block size $B = 6$ in Figure \ref{F:RAR_flowchart}, where 2 out of 6 is for placebo and 3 out of 6 is for the better performing group. Before the current trial conduct, we first build three sets of randomization schedules: Schedule $Q_1$ with equal randomization probabilities, Schedule $Q_2$ with a higher randomization probability for $L$, and Schedule $Q_3$ with high probability for $H$. During the burn-in period, Schedule $Q_1$ is activated to achieve equal randomization. For each subject enrolled after burn-in period, the standardized responses $R_L$ and $R_H$ in (\ref{notation_measure}) are calculated based on accumulated data for $L$ and $H$, respectively. If $R_L \geq R_H$ which means that $L$ is performing better, then Schedule $Q_2$ is activated with a higher probability of $3$ out $6$ to assign the next subject to this dose, and vice versa. In our proposed RABR, there are several design parameters involved, for example the burn-in size $M$, block size $B$, allocation vector $\boldsymbol{r}$ and total sample size $N$. Their choices are relatively flexible, but have to be pre-specified in the protocol or Statistical Analysis Plan (SAP). We provide more discussion on how to choose values for these parameters to meet specific study objectives in Section \ref{s:sim_power} and \ref{s:real}. The impact of design parameters are also evaluated later in Section \ref{s:sim_power}.  

\begin{figure}
	\centerline{\includegraphics[width=6.3in]{RAR_flowchart.png}}
	\caption{RABR randomization procedures.}
	\label{F:RAR_flowchart}
\end{figure}


\section{Inference Procedures}
\label{s:inf}

\subsection{Type I error rate in pairwise comparison for a continuous endpoint}
\label{s:con_error}

For demonstration purposes in this section, we assume that $\sigma_1 = \sigma_2 = 1$. The following proof can be generalized to situations with unknown heterogeneous variances by substituting $\sigma_1$ and $\sigma_2$ by their consistent estimators $\widehat{\sigma}_1$ and $\widehat{\sigma}_2$, respectively. To test the null hypothesis $H_0: \mu_0 = \mu_1$ against the one-sided alternative hypothesis $H_1: \mu_0 < \mu_1$, one can use the following {\it z} statistic $S$, which is defined by 
\begin{equation}
\label{equ_un_stats_all}
	S = \left\{ \xbar{X}^{(1)} - \xbar{X}^{(0)}\right\} \Bigg/ \left\{\frac{1}{N^{(1)}}+\frac{1}{N^{(0)}}\right\}^{1/2}. 
\end{equation}
In the two-stage setting, the unweighted statistic $S$ can be decomposed as $S\left(N_2^{(1)}, N_2^{(1)} \right)$,
\begin{align}
S = S\left(N_2^{(1)}, N_2^{(1)} \right) & = \left\{\frac{\xbar{X}_1^{(1)} N_1^{(1)} + \xbar{X}_2^{(1)} N_2^{(1)} }{ N_1^{(1)} +  N_2^{(1)}} - \xbar{X}^{(0)}\right\} \Bigg/ \left\{\frac{1}{N_1^{(1)}+N_2^{(1)}}+\frac{1}{N^{(0)}}\right\}^{1/2} \nonumber \\
& = \left[w_1\left(N_2^{(1)}\right)\right]^{1/2} Z_1 + \left[1-w_1\left(N_2^{(1)}\right)\right]^{1/2} Z_2\left(N_2^{(1)}\right) \label{stat_un} ,
\end{align}
where
\begin{align}
	S\left(a, b \right) = & \{w_1(b)\}^{1/2} Z_1 + \{1-w_1(b)\}^{1/2} Z_2(a) \label{step1_stats}, \\
	Z_1 = & \left\{N_1^{(1)}\right\}^{1/2} \: \left\{ \xbar{X}_1^{(1)}-\mu_0 \right\}, \nonumber \\
	Z_2(a) = & \left[\frac{a}{N_1^{(1)}+a}\left\{\xbar{X}_2^{(1)}-\mu_0 \right\} - \left\{\xbar{X}^{(0)}-\mu_0 \right\}\right]\Bigg/{\left[\frac{a}{\left\{N_1^{(1)}+a\right\}^2}+\frac{1}{N^{(0)}}\right]^{1/2}}, \nonumber  \\
	w_1(b) = &  {\frac{N_1^{(1)}}{\left\{N_1^{(1)}+b\right\}^2}}\Bigg/\left\{\frac{1}{N_1^{(1)}+b}+\frac{1}{{N^{(0)}}}\right\}, \label{w1}
\end{align}

One can show that $Z_1$ and $Z_2\left(N_2^{(1)}\right)$ follow two independent standard normal distributions $\mathcal{N}(0, 1)$ under the null hypothesis $H_0$. \citep{cui1999} Even though $N_2^{(1)}$ depends on $\boldsymbol{D}_1$, the fact that $Z_2\left(N_2^{(1)}\right)$ follows the same $\mathcal{N}(0, 1)$ given different realizations of $N_2^{(1)}$ (either $\beta_1 N_2^{(0)}$ or $\beta_2 N_2^{(0)}$) ensures the marginal independence between $Z_1$ and $Z_2\left(N_2^{(1)}\right)$. 

In a non-adaptive design, $N_2^{(1)}$ takes a fixed value. By the normal response assumption in (\ref{notation_x_normal}), $S\left(N_2^{(1)}, N_2^{(1)} \right)$ follows $\mathcal{N}(0, 1)$ under the null hypothesis $H_0$. Therefore, $\text{pr}_{H_0}\left[ S\left(N_2^{(1)}, N_2^{(1)} \right) > c \right] = \alpha$, where $c = {\Phi}^{-1}(1-\alpha)$ and $\Phi^{-1}(\cdot)$ is the inverse of the cumulative distribution function of a standard normal distribution. The probability of erroneously detecting a significant low-dose treatment effect with no multiplicity adjustment under $H_0$ is exactly equal to a nominal level $\alpha$, for example $\alpha = 2.5\%$ in practice. 

In adaptive design where $N_2^{(1)}$ depends on $\boldsymbol{D}_1$, $S\left(N_2^{(1)}, N_2^{(1)} \right)$ does not follow $\mathcal{N}(0, 1)$ because the weight $\left[w_1\left(N_2^{(1)}\right)\right]^{1/2}$ is a function of $N_2^{(1)}$. Proper adjustment is required to protect the type I error rate at $\alpha$. A popular approach is to use CHW statistics $S\left(N_2^{(1)}, N_2^{(0)} \right)$ with a pre-specified weight $\left[w_1\left(N_2^{(0)}\right)\right]^{1/2}$, where $\text{pr}_{H_0}\left[ S\left(N_2^{(1)}, b \right) > c \right] = \alpha$ as long as $w_1(b) \in (0, 1)$ is fixed, and is independent from interim data $\boldsymbol{D}_1$. \citep{cui1997, cui1999}

In our proposed RABR design, we still want to utilize the conventional statistics $S\left(N_2^{(1)}, N_2^{(1)} \right)$ in (\ref{step1_stats}) and the significance cutoff $c = \Phi^{-1}(1-\alpha)$ in our inference procedure. However, the control of type I error rate requires further investigation. Before studying the one-sided type I error control, we first provide the following Lemma,

\begin{lemma}
	
	For any constant $c^\prime \in \mathbb{R}$, $Q(w_1) = \text{pr} \left[(w_1)^{1/2} Z_1 + 
	(1-w_1)^{1/2} Z_2 > c \mid Z_1 \leq Z_3 + c^\prime \right]$ is a decreasing function with respect to $w_1 \in (0, 1)$ for $Z_1$, $Z_2$ and $Z_3$ following mutually independent $\mathcal{N}(0, 1)$.
\end{lemma}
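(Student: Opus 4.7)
The plan is to reduce the monotonicity claim to a classical property of bivariate Gaussian orthant probabilities. The intuition is that the convex combination $W := w_1^{1/2} Z_1 + (1-w_1)^{1/2} Z_2$ is marginally $\mathcal{N}(0,1)$ regardless of $w_1$; what changes with $w_1$ is how strongly $W$ is coupled to $Z_1$, and hence to the conditioning event $\{Z_1 \leq Z_3 + c'\}$. As $w_1$ grows, that coupling tightens, and since the conditioning event truncates $Z_1$ from above, the conditional distribution of $W$ is pushed downward, making $\{W > c\}$ less likely.

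First I would perform the orthogonal change of variables $U = w_1^{1/2} Z_1 + (1-w_1)^{1/2} Z_2$ and $V = (1-w_1)^{1/2} Z_1 - w_1^{1/2} Z_2$, giving independent $U, V \sim \mathcal{N}(0,1)$ with inversion $Z_1 = w_1^{1/2} U + (1-w_1)^{1/2} V$. Setting $T := Z_1 - Z_3$, the denominator $\Pr(T \leq c') = \Phi(c'/\sqrt{2})$ is independent of $w_1$, so it suffices to show that the numerator $\Pr(U > c,\; T \leq c')$ is strictly decreasing in $w_1$. The joint law of $(U, T)$ is centered Gaussian with $\mathrm{Var}(U) = 1$, $\mathrm{Var}(T) = 2$, and $\mathrm{Cov}(U, T) = w_1^{1/2}$, so $(U, T/\sqrt{2})$ is bivariate standard normal with correlation $\rho(w_1) = (w_1/2)^{1/2}$, which is strictly increasing on $(0,1)$.

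I would then write $\Pr(U > c,\; T \leq c') = \Phi(-c) - \Pr(U > c,\; T/\sqrt{2} > c'/\sqrt{2})$ and invoke Sheppard's identity $\partial_\rho \Pr(U > a, \tilde T > b) = \phi_2(a, b; \rho) > 0$ for the bivariate standard normal, so that the numerator is strictly decreasing in $\rho$ and hence in $w_1$. If one prefers a fully self-contained derivation, I would instead express the numerator as the explicit integral $\int_c^\infty \phi(u)\,\Phi\bigl((c' - w_1^{1/2} u)/(2 - w_1)^{1/2}\bigr)\, du$, differentiate under the integral sign, and complete the square in the product of Gaussian densities to rewrite it as a scaled shifted Gaussian $\phi_{\mu,\sigma}(u)$ with $\mu = c' w_1^{1/2}/2$ and $\sigma^2 = (2 - w_1)/2$; the residual factor $c' w_1^{1/2} - 2u = -2(u - \mu)$ reduces the derivative to a strictly negative multiple of $\sigma^2 \phi_{\mu,\sigma}(c)$ via the elementary identity $\int_c^\infty (u - \mu)\phi_{\mu,\sigma}(u)\,du = \sigma^2 \phi_{\mu,\sigma}(c)$.

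The main obstacle is purely bookkeeping, namely either cleanly invoking Sheppard's derivative formula for bivariate normal orthant probabilities, or executing the completion of the square without algebraic slips. Importantly, neither route requires any sign restriction on $c$ or $c'$, so the monotonicity conclusion holds in the full stated generality and directly yields the conditional-probability inequality that the subsequent type I error argument in Section~\ref{s:con_error} will invoke.
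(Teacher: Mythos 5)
Your proposal is correct, and it takes a genuinely different route from the paper. The paper's proof rewrites $Q(w_1)$ as an iterated integral over the probability-transformed variables $t_1=\Phi(z_1)$, $t_3=\Phi(z_3)$ and then simply asserts that Leibniz differentiation gives $dQ/dw_1\leq 0$, leaving the actual verification to the reader; moreover, its displayed representation is really a $Z_3$-mixture of conditional probabilities $\text{pr}[\,\cdot\mid Z_1\leq z_3+c']$, which equals $Q(w_1)$ only after noting that the denominator $\text{pr}(Z_1\leq Z_3+c')$ is free of $w_1$ so that monotonicity of each mixture component suffices --- a step the paper does not spell out. You instead exploit exactly that observation up front (the denominator is $\Phi(c'/\sqrt{2})$, constant in $w_1$), reduce to the joint probability $\text{pr}(U>c,\,T\leq c')$, and identify $w_1$ as nothing but the correlation parameter $\rho(w_1)=(w_1/2)^{1/2}$ of a bivariate Gaussian pair, so that Plackett--Sheppard's identity $\partial_\rho\Phi_2 = \phi_2>0$ settles the sign immediately; your completion-of-the-square fallback (with $\mu=c'w_1^{1/2}/2$, $\sigma^2=(2-w_1)/2$, and $\int_c^\infty(u-\mu)\phi_{\mu,\sigma}(u)\,du=\sigma^2\phi_{\mu,\sigma}(c)$) checks out and makes the argument fully self-contained. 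What your approach buys is a strict, sign-condition-free monotonicity statement in which every step is explicit and the structural reason for the result (tighter coupling of the test statistic to the truncated coordinate) is visible; what the paper's approach buys is brevity, at the cost of an unverified derivative computation and a representation that needs an extra justification to match the stated conditional probability.
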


The proof is provided in the Supplemental Materials Section 1. In the following theorem, we prove that the probability of falsely detecting treatment effect with the decision rule $S\left(N_2^{(1)}, N_2^{(1)} \right) > c$ in the proposed design under $H_0$ is upper bounded by $\alpha$.

\begin{theorem}
	\label{theorem_error}
	In the RABR design with a normal response in (\ref{notation_x_normal}) and two active treatment groups, we have
	\begin{equation}
		\label{proof_equ}
		{\text{pr}}_{H_0}\left[S\left(N_2^{(1)}, N_2^{(1)} \right) > c \right] \leq \alpha. 
	\end{equation} 
\end{theorem}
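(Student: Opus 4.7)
The plan is to reduce to the boundary of the null, split the rejection event by which active arm ``wins'' the interim comparison, and exploit the identity that for any fixed weight $w\in(0,1)$ the mixed statistic $w^{1/2}Z_1+(1-w)^{1/2}Z_2$ is standard normal, so its tail probability at $c=\Phi^{-1}(1-\alpha)$ equals $\alpha$ exactly. The Lemma will then supply the monotonicity needed to control the residual cross-term.

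By Example 8.3.29 of Casella and Berger (2002) it suffices to treat $\mu^0=\mu^1$, with $\mu^2$ an arbitrary nuisance parameter. Introducing $Z_3=(n_1^2)^{1/2}(\xbar{X}_1^2-\mu^2)\sim\mathcal{N}(0,1)$, and using the equal-burn-in condition $n_1^1=n_1^2$, the adaptive rule $(n_1^1)^{1/2}\xbar{X}_1^1\ge(n_1^2)^{1/2}\xbar{X}_1^2$ becomes $\{Z_1\ge Z_3+c^\prime\}$ with $c^\prime=(n_1^1)^{1/2}(\mu^2-\mu^0)$, which is exactly the conditioning event appearing in the Lemma. Since $\xbar{X}_1^1$, $\xbar{X}_1^2$, and the stage-2 / placebo data come from disjoint cohorts, the conditional invariance principle cited in the excerpt allows the stage-2 summand to be realized, for each candidate value of $N_2^1$, as an independent standard normal $Z_2$ independent of $(Z_1,Z_3)$.

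Let $A=\{N_2^1=\beta_1 n_2^0\}=\{Z_1\ge Z_3+c^\prime\}$, $w_a=w_1(\beta_1 n_2^0)$, and $w_b=w_1(\beta_2 n_2^0)$. A direct inspection of (\ref{w1}) shows that $w_1(\cdot)$ is strictly decreasing, so $\beta_1\ge\beta_2$ yields $w_a\le w_b$. Define $S^{(a)}=w_a^{1/2}Z_1+(1-w_a)^{1/2}Z_2$ and $S^{(b)}=w_b^{1/2}Z_1+(1-w_b)^{1/2}Z_2$; each is marginally $\mathcal{N}(0,1)$, and the observed statistic equals $S^{(a)}$ on $A$ and $S^{(b)}$ on $A^c$. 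Splitting the identity $\text{pr}\{S^{(a)}>c\}=\alpha$ over $A\cup A^c$ and subtracting yields
\begin{align*}
\text{pr}_{H_0}\{S>c\}
&=\text{pr}\{S^{(a)}>c,A\}+\text{pr}\{S^{(b)}>c,A^c\}\\
&=\alpha-\text{pr}\{S^{(a)}>c,A^c\}+\text{pr}\{S^{(b)}>c,A^c\}\\
&=\alpha+\text{pr}(A^c)\bigl\{Q(w_b)-Q(w_a)\bigr\},
\end{align*}
where $Q(w)=\text{pr}\{w^{1/2}Z_1+(1-w)^{1/2}Z_2>c\mid Z_1\le Z_3+c^\prime\}$ is precisely the conditional tail that the Lemma proves to be decreasing in $w$. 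Since $w_a\le w_b$, the bracket is non-positive, which yields (\ref{proof_equ}).

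The main obstacle will be the coupling step in the preceding paragraph: the identity $\text{pr}\{S^{(a)}>c\}=\alpha$ is being invoked on $A^c$, where the stage-2 data actually has size $\beta_2 n_2^0$ and $S^{(a)}$ is therefore a counterfactual statistic. What makes the manipulation legitimate is the conditional invariance principle, which lets the stage-2 standard normal summand be placed on an enlarged probability space independent of $(Z_1,Z_3)$ regardless of which $N_2^1$ is realized. Once this bookkeeping is made explicit, the remainder reduces to the marginal-tail identity, the monotonicity of $w_1(\cdot)$, and a single appeal to the Lemma.
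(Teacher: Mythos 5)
Your proposal is correct and follows essentially the same route as the paper's proof: decompose the rejection probability over the interim selection event $\{Z_1\ge Z_3+c^\prime\}$, compare against the fixed-weight CHW statistic whose tail probability is exactly $\alpha$, and invoke the monotonicity of $w_1(\cdot)$ together with Lemma~1; your identity $\text{pr}_{H_0}\{S>c\}=\alpha+\text{pr}(A^c)\{Q(w_b)-Q(w_a)\}$ is just a compact algebraic repackaging of the paper's term-by-term comparison of its two displayed decompositions. Your explicit handling of $\mu^2$ as a nuisance parameter (centering $Z_3$ at $\mu^2$ and absorbing the shift into $c^\prime$) and your flagging of the counterfactual use of $S^{(a)}$ on $A^c$ are slightly more careful than the paper's presentation, but they do not change the argument.
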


\begin{proof}
	The randomization vector ${r}$ satisfies $r_1 \geq r_2 \geq 0$, which is equivlent to $\beta_1 \geq \beta_2 \geq 0$ given $r_0$ at each stage. Consider
	\begin{align}
		& \left[w_1\left(\beta_1 N_2^{(0)}\right) - w_1\left(\beta_2 N_2^{(0)}\right) \right]\left\{\frac{1}{N_1^{(1)}+\beta_1 N_2^{(0)}}+\frac{1}{N^{(0)}} \right\}\left\{\frac{1}{N_1^{(1)}+\beta_2 N_2^{(0)}}+\frac{1}{N^{(0)}} \right\} \nonumber \\
		= & \frac{N_1^{(1)}}{\left\{N_1^{(1)} + \beta_1 N_2^{(0)}\right\}^2 \left\{N_1^{(1)} + \beta_2 N_2^{(0)}\right\}} - \frac{N_1^{(1)}}{\left\{N_1^{(1)} + \beta_2 N_2^{(0)}\right\}^2\left\{N_1^{(1)} + \beta_1 N_2^{(0)}\right\}} + \label{inequ_two_w1_1} \\
		& \frac{N_1^{(1)}}{\left\{N_1^{(1)} + \beta_1 N_2^{(0)}\right\}^2 N^{(0)}} - \frac{N_1^{(1)}}{\left\{N_1^{(1)} + \beta_2 N_2^{(0)}\right\}^2 N^{(0)}}. \label{inequ_two_w1_2} 
	\end{align}
	Given that $\beta_1 \geq \beta_2 \geq 0$, one can verify that (\ref{inequ_two_w1_1})$ \leq 0$, (\ref{inequ_two_w1_2})$ \leq 0$, and therefore $w_1\left(\beta_1 N_2^{(0)}\right) \leq w_1\left(\beta_2 N_2^{(0)}\right)$. 
	
	Based on the decomposition of $S\left(N_2^{(1)}, N_2^{(1)} \right)$ in (\ref{stat_un}), $Z_1$ and $Z_2\left(N_2^{(1)}\right)$ follow two independent $\mathcal{N}(0, 1)$'s under null $H_0$. We drop $\left(N_2^{(1)}\right)$ in the notation of $Z_2\left(N_2^{(1)}\right)$ for simplicity. Under $H_0$, we have
	\begin{align}
		& \text{pr}_{H_0}\left[ S\left(N_2^{(1)}, N_2^{(1)} \right) > c \right] \nonumber \\
		= & \text{pr}_{H_0} \left(\left[w_1\left(\beta_1 N_2^{(0)}\right)\right]^{1/2}Z_1 + \left[1-w_1\left(\beta_1 N_2^{(0)}\right)\right]^{1/2}Z_2 > c \mid Z_1 \geq Z_1^{(2)} + c^\prime \right) \text{pr}\left\{ Z_1 \geq Z_1^{(2)} + c^\prime \right\} + \nonumber \\
		& \text{pr}_{H_0} \left(\left[w_1\left(\beta_2 N_2^{(0)}\right)\right]^{1/2}Z_1 + \left[1-w_1\left(\beta_2 N_2^{(0)}\right)\right]^{1/2}Z_2 > c \mid Z_1 < Z_1^{(2)} + c^\prime \right) \text{pr}\left\{Z_1 < Z_1^{(2)} + c^\prime \right\}, \label{step3_prob_random}
	\end{align}
	where $Z_1^{(2)} = \left\{N_1^{(2)}\right\}^{1/2} \: \left\{\xbar{X}_1^{(2)} - \mu_0\right\}$, $c^\prime = \left\{N_1^{(2)}\right\}^{1/2} \: \mu_0 - \left\{N_1^{(1)}\right\}^{1/2} \: \mu_0$. Note that the condition $Z_1 \geq Z_1^{(2)} + c^\prime$ is equivalent to our arm selection criteria if treatment group $1$ is more promising, as defined by the evaluation function (4). On the other hand, consider a CHW statistic with a fixed weight $\left[w_1\left(\beta_1 N_2^{(0)}\right)\right]^{1/2}$ for example, we have an analytically type I error control at $\alpha$ :
	\begin{align}
		\alpha = & \text{pr}\left[ S\left(N_2^{(1)}, \beta_1 N_2^{(0)} \right) > c \right] \nonumber \\
		= & \text{pr} \left(\left[w_1\left(\beta_1 N_2^{(0)}\right)\right]^{1/2}Z_1 + \left[ 1-w_1\left(\beta_1 N_2^{(0)}\right)\right]^{1/2}Z_2 > c \right) \nonumber \\
		= & \text{pr} \left(\left[w_1\left(\beta_1 N_2^{(0)}\right)\right]^{1/2}Z_1 + \left[1-w_1\left(\beta_1 N_2^{(0)}\right)\right]^{1/2}Z_2 > c \mid Z_1 \geq Z_1^{(2)} + c^\prime \right) \text{pr}\left\{ Z_1 \geq Z_1^{(2)} + c^\prime \right\} + \nonumber \\
		& \text{pr} \left(\left[w_1\left(\beta_1 N_2^{(0)}\right)\right]^{1/2}Z_1 + \left[1-w_1\left(\beta_1 N_2^{(0)}\right)\right]^{1/2}Z_2 > c \mid Z_1 < Z_1^{(2)} + c^\prime \right) \text{pr}\left\{Z_1 < Z_1^{(2)} + c^\prime \right\}. \label{step3_prob_CHW}
	\end{align}
	From Lemma 1 and $w_1\left(\beta_1 N_2^{(0)}\right) \leq w_1\left(\beta_2 N_2^{(0)}\right)$ shown previously, we have 
	\begin{align*}
		& \text{pr} \left(\left[w_1\left(\beta_2 N_2^{(0)}\right)\right]^{1/2}Z_1 + \left[1-w_1\left(\beta_2 N_2^{(0)}\right)\right]^{1/2}Z_2 > c \mid Z_1 < Z_1^{(2)} + c^\prime \right) \nonumber \\
		\leq  & \text{pr} \left(\left[w_1\left(\beta_1 N_2^{(0)}\right)\right]^{1/2}Z_1 + \left[1-w_1\left(\beta_1 N_2^{(0)}\right)\right]^{1/2}Z_2 > c \mid Z_1 < Z_1^{(2)} + c^\prime \right), \label{step3_expand_1}
	\end{align*}
	and therefore (\ref{step3_prob_random}) $\leq$ (\ref{step3_prob_CHW}). Theorem 1 is proved. 
\end{proof}

Next we generalize Theorem \ref{theorem_error} from a single adaptation to one with multiple checkpoints in the following Theorem:
\begin{theorem}
	\label{theorem_error_multiple}
	In the RABR design with a normal response in (\ref{notation_x_normal}), two active treatment groups and more than two adaptation timepoints, we have
	\begin{equation}
		\label{proof_equ}
		{\text{pr}}_{H_0}\left[ S > c \right] \leq \alpha,
	\end{equation} 
where $S$ is the unweighted statistic defined in (\ref{equ_un_stats_all}). 
\end{theorem}

The proof follows by iteratively applying Theorem \ref{theorem_error} using backward induction from the last two stages to early stages. The idea is based on Brannath et al, \cite{bra2002} who prove that the type I error rate is equal to the nominal level $\alpha$ in an adaptive design with multiple stages by using fixed weights in (\ref{w1}). Theorem \ref{theorem_error} and \ref{theorem_error_multiple} can be straightforwardly generalized to accommodate a study with three active treatment groups, with proof in the Supplemental Materials Section 2. The generalization to $m>3$ active treatment groups needs to be cautious, as the analytic proof of type I error protection needs additional conditions on the study design. More discussion on this generalization is provided in Section \ref{s:dis}. In Section 3 of the Supplemental Materials, we illustrate a corresponding theorem of using the proportional test in the context of a binary endpoint.  

The error rate protection in pairwise comparison can be generalized to other intersection hypotheses as well. Therefore, by applying proper multiple testing procedures, for example the Bonferroni test based on the closure principle, \citep{bre2016} it is sufficient to protect the one-sided FWER at $\alpha$ in the strong sense.

\section{Simulation studies}
\label{s:sim}

By simulations, we study the type I error rate of RABR design with a continuous endpoint at Section \ref{s:sim_error}. The final allocations and power are further compared with DBCD at Section \ref{s:sim_power} under several response assumptions. 

Consider a clinical trial with three active treatment dosing groups, $D_1$, $D_2$, $D_3$, and a placebo. There are $M=60$ subjects being recruited in the burn-in period with fixed equal randomization probabilities and a subsequent $60$ subjects being adaptively and sequentially randomized. Therefore, the total sample size is $N=120$. We utilize the step-down Dunnett test to control multiplicity. \citep{bre2016} At the final efficacy analysis stage, the dosing group with the smallest step-down Dunnett adjusted $p$-value is chosen as the selected one. Its efficacy is further confirmed if its adjusted $p$-value is smaller than $\alpha$. We use $S_1$ to denote the selected treatment group, and use $S_2$ and $S_3$ for the second best and worst groups.

The study objective is to identify and confirm the efficacy of the best performing dose $S_1$ with FWER controlled at one-sided $\alpha = 2.5\%$, while achieving a desired final sample size for that group and the placebo. Note that the choice of $S_1$ from $D_1$, $D_2$ and $D_3$ depends on the data, and does not necessarily correspond to a particular dosing group.  

\subsection{Type I error control}
\label{s:sim_error}

To evaluate the type I error rate of using the usual Student's $t$ statistics in RABR, we set the response mean for all groups to be the same, that is $\mu_g = \widetilde{\mu}$ for $g = 0, 1, 2, 3$, where $\widetilde{\mu}$ is considered at $0$ or $1$. The variance $\sigma^2$ is assumed to be $1$ for all groups. The following $5$ different adaptive randomization ratio vectors $\boldsymbol{r}$'s are considered:

\begin{enumerate}
	\item $\boldsymbol{r}_a =  (8, 4, 4, 4) $,
	\item $\boldsymbol{r}_b =  (8, 5, 4, 3) $,
	\item $\boldsymbol{r}_c =  (8, 7, 4, 1) $,
	\item $\boldsymbol{r}_d =  (8, 5, 5, 2) $,
	\item $\boldsymbol{r}_e =  (9, 9, 1, 1) $.
\end{enumerate}

For example in $\boldsymbol{r}_b$, the randomization probability to placebo is fixed at $8$ out of $20$ in the RAR period, while the best performing dose gets $5$ out of $20$. The first $\boldsymbol{r}_a$ corresponds to a non-adaptive design with fixed equal randomization probability to each treatment group. We use this benchmark to demonstrate that the type I error rates of RABR in $\boldsymbol{r}_b$, $\boldsymbol{r}_c$, $\boldsymbol{r}_d$ and $\boldsymbol{r}_e$ are controlled at $\alpha = 2.5\%$ as proved in Theorem \ref{theorem_error} and \ref{theorem_error_multiple}. The number of simulation iterations is $100,000$. We also consider a small size trial with a total of $n=40$ subjects and burn-in size of $M=20$. 

As can be seen from Table \ref{T:error_con}, the probability of erroneously claiming a significant treatment effect with no multiplicity adjustment in each dose is controlled under $\alpha = 2.5\%$ for RABR with $\boldsymbol{r}_b$, $\boldsymbol{r}_c$, $\boldsymbol{r}_d$ and $\boldsymbol{r}_e$, while it is around $2.5\%$ under the non-adaptive design $\boldsymbol{r}_a$. The error rate in pairwise comparison is smaller than the nominal level, as we proved in Theorem \ref{theorem_error} and \ref{theorem_error_multiple}. With the step-down Dunnett procedure to adjust multiplicity, the adjusted error rate for each dose is much smaller than $\alpha$, while the overall FWER is controlled at $\alpha$ as well.

\begin{table}[ht]
	\centering
	\caption{The unweighted Student's $t$ test statistics have controlled one-sided type I error rates at the $2.5\%$ level in RABR design with a continuous endpoint.}
	\label{T:error_con}
	\begin{tabular}{cccccccccc}
		\toprule
		$N$ &  $\mu_g = \widetilde{\mu}$ & $\boldsymbol{r}$ & \multicolumn{7}{c}{Probability of rejecting null hypothesis} \\ 
		\cmidrule(lr){4-10}
		& for $g=0, 1, 2, 3$&  & \multicolumn{3}{c}{with no multiplicity adjustment} & \multicolumn{4}{c}{with step-down Dunnett} \\ 
		\cmidrule(lr){4-6}\cmidrule(lr){7-10}
		&  & & $D_1$ & $D_2$ & $D_3$ & $D_1$ & $D_2$ & $D_3$ & overall \\ 
	\midrule
		\\
120 & 0 & $\boldsymbol{r}_a=($8, 4, 4, 4$)$ & 2.45\% & 2.49\% & 2.48\% & 0.93\% & 0.96\% & 0.96\% & 2.46\% \\
&  &  $\boldsymbol{r}_b=($8, 5, 4, 3$)$ & 2.38\% & 2.36\% & 2.35\% & 0.91\% & 0.92\% & 0.92\% & 2.39\% \\
&  &  $\boldsymbol{r}_c=($8, 7, 4, 1$)$ & 2.21\% & 2.21\% & 2.18\% & 0.87\% & 0.85\% & 0.85\% & 2.27\% \\
&  &  $\boldsymbol{r}_d=($8, 5, 5, 2$)$ & 2.40\% & 2.36\% & 2.46\% & 0.93\% & 0.94\% & 0.95\% & 2.44\% \\
&  &  $\boldsymbol{r}_e=($9, 9, 1, 1$)$ & 2.01\% & 1.98\% & 1.93\% & 0.81\% & 0.76\% & 0.74\% & 2.13\% \\
\\
& 1 &  $\boldsymbol{r}_a=($8, 4, 4, 4$)$ & 2.46\% & 2.52\% & 2.56\% & 0.99\% & 0.99\% & 1.01\% & 2.55\% \\
&  &  $\boldsymbol{r}_b=($8, 5, 4, 3$)$ & 2.41\% & 2.42\% & 2.44\% & 1.00\% & 0.92\% & 0.96\% & 2.47\% \\
&  &  $\boldsymbol{r}_c=($8, 7, 4, 1$)$ & 2.17\% & 2.17\% & 2.21\% & 0.86\% & 0.85\% & 0.89\% & 2.25\% \\
&  &  $\boldsymbol{r}_d=($8, 5, 5, 2$)$ & 2.44\% & 2.49\% & 2.45\% & 0.98\% & 0.94\% & 0.94\% & 2.46\% \\
&  &  $\boldsymbol{r}_e=($9, 9, 1, 1$)$ & 1.99\% & 1.91\% & 1.96\% & 0.81\% & 0.71\% & 0.75\% & 2.01\% \\
\\
40 & 0 &  $\boldsymbol{r}_a=($8, 4, 4, 4$)$ & 2.53\% & 2.44\% & 2.47\% & 1.03\% & 0.96\% & 0.97\% & 2.48\% \\
&  &  $\boldsymbol{r}_b=($8, 5, 4, 3$)$ & 2.45\% & 2.40\% & 2.29\% & 0.98\% & 0.93\% & 0.92\% & 2.40\% \\
&  &  $\boldsymbol{r}_c=($8, 7, 4, 1$)$ & 2.11\% & 2.14\% & 2.16\% & 0.83\% & 0.83\% & 0.87\% & 2.19\% \\
&  &  $\boldsymbol{r}_d=($8, 5, 5, 2$)$ & 2.50\% & 2.39\% & 2.38\% & 0.99\% & 0.96\% & 0.95\% & 2.48\% \\
&  &  $\boldsymbol{r}_e=($9, 9, 1, 1$)$ & 1.88\% & 1.93\% & 1.93\% & 0.72\% & 0.75\% & 0.75\% & 2.04\% \\
\\
& 1 &  $\boldsymbol{r}_a=($8, 4, 4, 4$)$ & 2.49\% & 2.53\% & 2.49\% & 1.00\% & 0.99\% & 1.00\% & 2.53\% \\
&  &  $\boldsymbol{r}_b=($8, 5, 4, 3$)$ & 2.38\% & 2.39\% & 2.35\% & 0.94\% & 0.95\% & 0.94\% & 2.41\% \\
&  &  $\boldsymbol{r}_c=($8, 7, 4, 1$)$ & 2.13\% & 2.24\% & 2.16\% & 0.84\% & 0.91\% & 0.89\% & 2.29\% \\
&  &  $\boldsymbol{r}_d=($8, 5, 5, 2$)$ & 2.45\% & 2.39\% & 2.38\% & 0.95\% & 0.94\% & 0.97\% & 2.46\% \\
&  &  $\boldsymbol{r}_e=($9, 9, 1, 1$)$ & 1.83\% & 1.86\% & 1.88\% & 0.75\% & 0.80\% & 0.78\% & 2.03\% \\
		\bottomrule
	\end{tabular}
\end{table}

\subsection{Power and final allocations}
\label{s:sim_power}

Next we evaluate the power and final allocations of our RABR against DBCD under several alternative hypotheses. Given a total sample size of $N=120$, let's consider a study with a target sample size for the placebo and the selected dose at $42$, which corresponds to $35\%$ of the total sample size.  

In RABR, supposing that we choose $M=60$ in the burn-in period, one can utilize the adaptive randomization vector $\boldsymbol{r}_e =  (9, 9, 1, 1) $ to achieve the target sample size of $42$ for the selected group and placebo. Note that in order to exactly reach the desired size, the best performing group identified based on the burn-in period should consistently receive a $9$ out of $20$ randomization probability throughout the RAR process. As shown later on,  our RABR leans toward the desired allocations because the probability of flipping is relatively small. Discussion on how to chose design parameters of RABR is provided later. 

When it comes to DBCD, however, one needs to assume a certain response vector $\boldsymbol{\mu} = \left(\mu_0, \mu_1, \mu_2, \mu_3\right)$, and fine tune the parameter $\lambda$ in (\ref{y_g_original}) to meet the study objective that both the placebo and the selected arm have a $35\%$ final sample size proportion. Since the response mean in the placebo is usually different from that in treatment groups, equation (\ref{y_g_original}) requires further adjustment. Even with a modified allocation function, the choice of the tuning parameter $\lambda$ is sensitive to the assumed $\boldsymbol{\mu}$ at the study design stage, which can lead to undesired allocations if observed data deviate. The function form in (\ref{y_g_original}) is used for demonstration. 

Consider the following three assumptions on response means from the placebo and three doses,
\begin{enumerate}
	\item $\boldsymbol{\mu}_A = (0.43, 0.48, 0.63, 1.2)$,	
	\item $\boldsymbol{\mu}_B = (0.43, 0.68, 0.93, 1.2)$, 
	\item $\boldsymbol{\mu}_C = (0.43, 1, 1.15, 1.2)$.
\end{enumerate}
Table \ref{T:power_con} summarizes the multiplicity adjusted power of the selected arm in RABR, a non-adaptive design with equal randomization probabilities, and DBCD with $\lambda = -2, 0, 2$. The overall multiplicity adjusted power is the probability of rejecting at least one elementary null hypothesis, or equivalently rejecting the null hypothesis in the best performing group $S_1$. For each individual group $D_1$, $D_2$ and $D_3$, we report the probability of selecting this group (if its adjusted $p$-value is the smallest among three doses) and confirming its efficacy (if its adjusted $p$-value is smaller than $\alpha$). Therefore, the sum of them from three groups is equal to the overall power. Our RABR has the largest overall power gain in $\boldsymbol{\mu}_A$ compared to a non-adaptive design ($83.27\%$ versus $72.32\%$) and DBCD with $\lambda = 2$ ($83.27\%$ versus $75.38\%$). It also has the highest probability of identifying the right dose $D_3$ and confirming its efficacy at $82.35\%$, compared to $71.72\%$ in a non-adaptive design and $75.06\%$ in DBCD with $\lambda = 2$. In $\boldsymbol{\mu}_B$ and $\boldsymbol{\mu}_C$, our method also has moderate overall power gain of claiming significance in the right treatment group (Table \ref{T:power_con}). 

Average sample sizes (ASNs) for selected treatment groups are presented at Table \ref{T:asn_con}. Across all three underlying response $\boldsymbol{\mu}$'s, RABR generally approaches the final desired sample size for the placebo, $S_1$, $S_2$ and $S_3$ at $42$, $42$, $18$ and $18$, respectively. The ASNs ($41.99$ in placebo and $40.44$ in $S_1$) are closer to targets under $\boldsymbol{\mu}_A$ where the best dose is separated well from the other two doses, but deviate moderately under $\boldsymbol{\mu}_C$, where response means of active treatment groups are close at $1$, $1.15$ and $1.2$. On the other hand, the ASNs of the placebo group and the selected group $S_1$ in DBCD do not reach the target of $42$. Figure \ref{F:sample_size} visualizes the sample proportion of RABR and DBCD with $\lambda=2$ at each of the $60$ adaptations after burn-in period. The final observed proportions are reported in the text to the right of each sub-plot, while the targets are projected by the horizontal dashed lines. Starting from an equal sample proportion of $25\%$ right after the burn-in period, our RABR approaches the desired treatment group allocations.

We provide some remarks on the superior power performance and the robustness of reaching target sample size in RABR as compared with DBCD. As shown in Table 2, RABR has much higher power than DBCD with varying $\lambda$ values. The reason is that RABR can effectively increase the sample size of placebo and $S_1$ to the target of 42 but decrease those in the two worst performing groups to $18$. For example under $\boldsymbol{\mu_A}$, RABR achieves $41.99$ in placebo and $40.44$ in $S_1$, while DBCD with $\lambda = -2$ has $29.96$ for placebo and $30.07$ for $S_1$. Moreover, RABR reaches the target sample size robustly under different treatment assumptions $\boldsymbol{\mu_A}$, $\boldsymbol{\mu_B}$ and $\boldsymbol{\mu_C}$ (Table \ref{T:asn_con}). On the other hand, the ASNs of DBCD are sensitive to the choice of parameter $\lambda$, which is selected at the study design stage with assumptions. When observed data deviate, our proposed RABR has a more robust performance than DBCD in reaching target sample size. 

Next, we discuss how to choose design parameters of RABR to meet a specific requirement of final sample size based on additional analysis in Table \ref{T:power_con_sen}. With the burn-in size $M$, a small value can lead to unstable estimation of treatment effect and may allow RABR to assign more subjects to inferior arms in early stages, while a large value does not leave much room for later adaptations. We suggest proposing several candidate values of $M$ around $50\%$ of total sample size $N$, and choosing a proper one based on simulations. Taking the $5$ rows under $\boldsymbol{\mu}_A$ as an example, we first evaluate three magnitudes of $M$ at $40$, $60$ and $24$ on the performance of RABR. Their corresponding $\boldsymbol{r}$'s with the block size around $20$ are chosen based on the study objective of reaching $42$ in both placebo and $S_1$. With a moderate $M=40$, RABR has a slightly higher power of selecting and confirming efficacy of $D_3$ at $82.37\%$ as compared with $M=60$ and $M=24$. In terms of ASN, $M=60$ performs better by having $40.41$ in $S_1$ because the estimates in (\ref{notation_measure}) for early adaptations are more accurate with a larger $M$. Under the same $M=40$, the next two rows evaluate $\boldsymbol{r} = (16, 16, 7, 1)$ with a larger batch size $B$, and $\boldsymbol{r} = (4, 4, 1, 1)$ with a smaller $B$. As mentioned previously, since each subject is sequentially randomized after burn-in period, then these three values of $\boldsymbol{r}$ with the same $M=40$ have the same number of adaptations at $80$. In general, $\boldsymbol{r} = (16, 16, 7, 1)$ has higher power and more accurate sample size than $\boldsymbol{r} = (4, 4, 1, 1)$. The reason is that if the best group $D_3$ has the second-best performance in early adaptations, it still has a relatively high probability of $7$ out of $40$ to get patients assigned and converge to the underling response mean later on. The above observations are generally consistent under $\boldsymbol{\mu}_B$ and $\boldsymbol{\mu}_C$ as well. In practice, one may choose $M=40$ with $\boldsymbol{r} = (16, 16, 7, 1)$ to reach a higher power or choose $M=60$ with $\boldsymbol{r} = (9, 9, 1, 1)$ with a more accurate sample size. 

\subsection{Sample size and power approximation}

In this section, we provide some guidance on how to approximate the sample size and power of RABR. Note that in order to fully access the power performance and operating characteristics of RABR, one needs to conduct simulations under varying design parameters and varying underlying responses. The below approximation is a starting point to obtain a rough estimate of power or sample size. 

We first discuss about how to approximate power given sample size using the hypothetical example in Section \ref{s:sim_power}. As a starting point, we can use two-sample $t$ test with Bonferroni correction to approximate the overall power in Table \ref{T:power_con}, which is the probability of claiming efficacy in at least one treatment group with multiplicity adjustment. The sample size of this two-sample $t$ test is $42$ for each group, which is our target sample size of placebo and $S_1$. The response means of two groups are set at $0.43$ from placebo, and $1.2$ from high dose in previous assumptions $\boldsymbol{\mu}_A$, $\boldsymbol{\mu}_B$, $\boldsymbol{\mu}_C$. The rationale is to consider a best case scenario where the high dose reaches the target final sample size of $42$. The one-sided significance level with Bonferroni correction can be easily calculated at $2.5\%$ divided by $3$, because we have $m=3$ doses. With standard software, we calculate the power at approximately $85\%$, which provides a pretty good estimate of the overall power under three underlying response mean vectors in Table \ref{T:power_con}. However, in order to further determine the probability of selecting and confirming the efficacy in each dose, we need to conduct simulations to implement RABR under varying responses. Simulations are also needed to choose design parameters in RABR as discussed previously. The link to our R code is provided in the Section of ``Supplementary Material''. 

To determine the sample size given a target power, one can also use the two-sample $t$ test with Bonferroni adjustment to approximate the sample size in selected groups given overall power, and then conduct simulations of RABR to further evaluate operating characteristics. 

\begin{table}[ht]
	\centering
	\caption{RABR has a higher power of selecting and confirming the efficacy of the best performing group ($D_3$) than the design with fixed randomization, and DBCD.}
	\label{T:power_con}
	\begin{tabular}{ccccccc}
		\toprule
		$\boldsymbol{\mu}$ & Method & $\lambda$ & \multicolumn{4}{c}{Step-down Dunnett adjusted power}  \\
		&  &  & \multicolumn{4}{c}{of selecting and confirming the efficacy in}   \\
		\cmidrule(lr){4-7}
		& & & $D_1$ & $D_2$ & $D_3$ & overall  \\ 
		\midrule
$\boldsymbol{\mu}_A$ & RABR & - & \bf 0.12\% & \bf 0.79\% & \bf 82.35\% & \bf 83.27\% \\ 
& Fixed & - & 0.07\% & 0.52\% & 71.72\% & 72.32\% \\ 
& DBCD & -2 & 0.08\% & 0.52\% & 71.99\% & 72.59\% \\ 
&  & 0 & 0.07\% & 0.45\% & 73.25\% & 73.77\% \\ 
&  & 2 & 0.04\% & 0.29\% & 75.06\% & 75.38\% \\ 
\\
$\boldsymbol{\mu}_B$ & RABR & - & \bf 1.08\% & \bf 12.24\% & \bf 69.11\% & \bf 82.44\% \\ 
& Fixed & - & 0.72\% & 9.46\% & 65.38\% & 75.56\% \\ 
& DBCD & -2 & 0.70\% & 9.52\% & 65.53\% & 75.76\% \\ 
&  & 0 & 0.62\% & 9.04\% & 66.23\% & 75.89\% \\ 
&  & 2 & 0.41\% & 7.51\% & 66.20\% & 74.13\% \\ 
\\
$\boldsymbol{\mu}_C$ & RABR & - & \bf 11.87\% & \bf 33.31\% & \bf 44.85\% & \bf 90.03\% \\ 
& Fixed & - & 9.57\% & 31.36\% & 44.27\% & 85.20\% \\ 
& DBCD & -2 & 9.57\% & 31.62\% & 44.33\% & 85.52\% \\ 
&  & 0 & 9.45\% & 31.21\% & 44.29\% & 84.95\% \\ 
&  & 2 & 7.94\% & 29.95\% & 43.10\% & 80.99\% \\ 
		\bottomrule
	\end{tabular}
\end{table}

\begin{table}[ht]
	\centering
	\caption{RABR reaches the target final sample size more robustly than the design with fixed randomization, and DBCD.}
	\label{T:asn_con}
	\begin{tabular}{ccccccc}
		\toprule
		$\boldsymbol{\mu}$ & Method & $\lambda$ & \multicolumn{4}{c}{Average sample size}  \\
		\cmidrule(lr){4-7}
		& & & placebo & $S_1$ & $S_2$ & $S_3$  \\ 
		\midrule
		\multicolumn{3}{r}{Target sample size: } & 42 & 42 & 18 & 18 \\ 
		\\
$\boldsymbol{\mu}_A$ & RABR & - & \bf 41.99 & \bf 40.44 & \bf 19.31 & \bf 18.27 \\ 
& Fixed & - & 30.00 & 30.03 & 30.02 & 29.96 \\ 
& DBCD & -2 & 29.96 & 30.07 & 30.02 & 29.95 \\ 
&  & 0 & 28.63 & 32.51 & 30.15 & 28.71 \\ 
&  & 2 & 24.04 & 43.36 & 28.74 & 23.86 \\ 
\\
$\boldsymbol{\mu}_B$ & RABR & - & \bf 42.00 & \bf 38.78 & \bf 20.67 & \bf 18.55 \\ 
& Fixed & - & 29.99 & 30.15 & 29.96 & 29.91 \\ 
& DBCD & -2 & 29.93 & 30.10 & 30.01 & 29.95 \\ 
&  & 0 & 28.00 & 31.86 & 30.76 & 29.39 \\ 
&  & 2 & 22.43 & 39.67 & 32.09 & 25.81 \\ 
\\
$\boldsymbol{\mu}_C$ & RABR & - & \bf 42.01 & \bf 38.00 & \bf 21.24 & \bf 18.75 \\ 
& Fixed & - & 29.99 & 30.33 & 30.01 & 29.67 \\ 
& DBCD & -2 & 29.91 & 30.16 & 30.04 & 29.89 \\ 
&  & 0 & 27.48 & 31.52 & 30.90 & 30.10 \\ 
&  & 2 & 20.91 & 37.26 & 33.21 & 28.62 \\ 
		\bottomrule
	\end{tabular}
\end{table}

\begin{figure}
	\centerline{\includegraphics[width=6.3in]{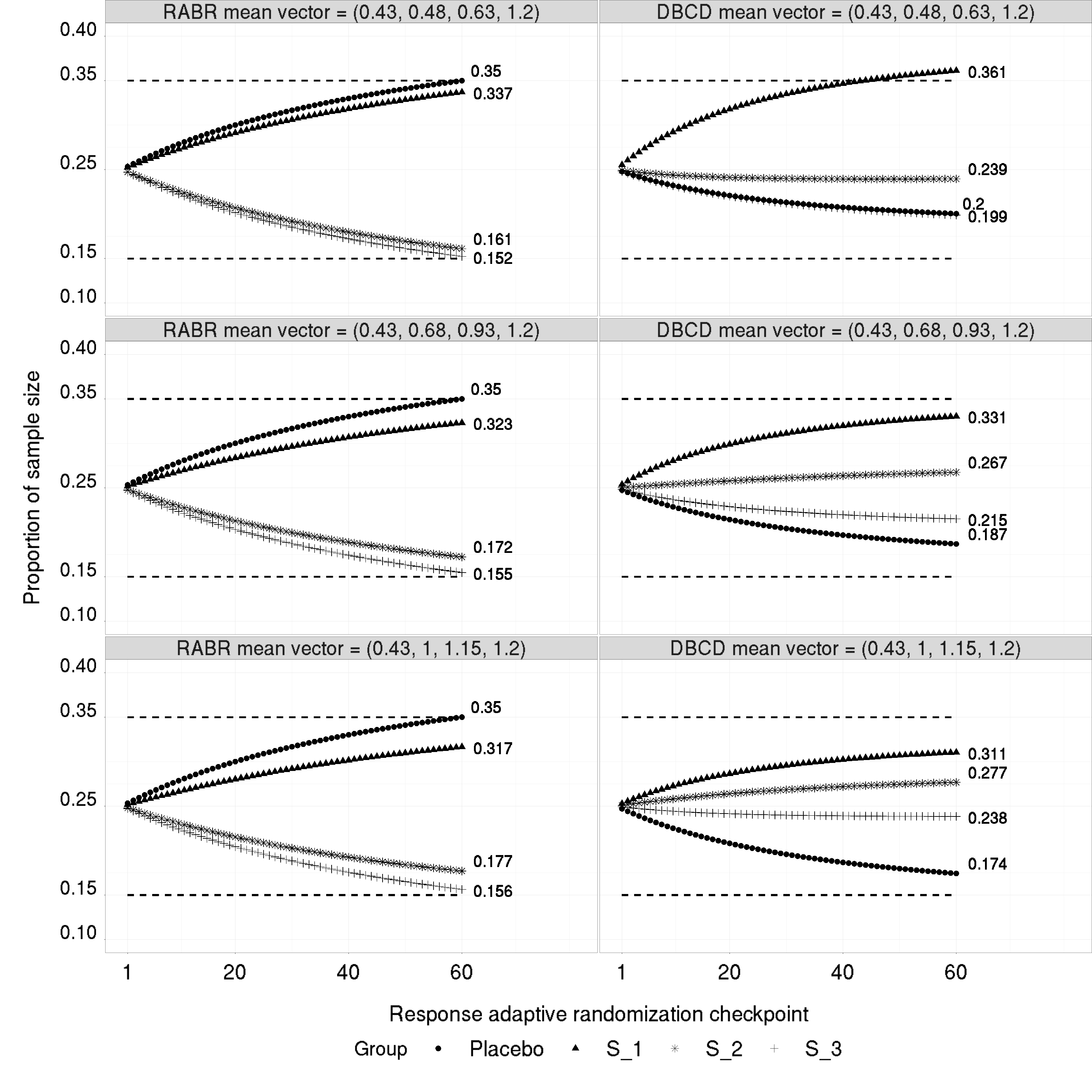}}
	\caption{RABR leans toward the target final sample size more robustly than DBCD with $\lambda=2$.}
	\label{F:sample_size}
\end{figure}

\begin{table}[ht]
\centering
\caption{Sensitivity analysis of RABR on the choices of randomization vector $\boldsymbol{r}$ and burn-in size $M$.}
\label{T:power_con_sen}
\begin{tabular}{ccccccccccc}
	\toprule
	$\boldsymbol{\mu}$ & $\boldsymbol{r}$ & $M$ & \multicolumn{4}{c}{Step-down Dunnett adjusted power} & \multicolumn{4}{c}{Average sample size} \\
	&  &  & \multicolumn{4}{c}{of selecting and confirming the efficacy in}   \\
	\cmidrule(lr){4-7} \cmidrule(lr){8-11}
	& & & $D_1$ & $D_2$ & $D_3$ & overall & placebo & $S_1$ & $S_2$ & $S_3$ \\ 
	\midrule
	 & & & & & \multicolumn{2}{c}{Target sample size: } & 42 & 42 & 18 & 18 \\ 
	\\
$\boldsymbol{\mu}_A$ & (8, 8, 3, 1) &    40 & 0.09\% & 0.72\% & 82.37\% & 83.17\% & 42.00 & 40.12 & 21.92 & 15.96 \\ 
& (9, 9, 1, 1) &    60 & 0.12\% & 0.79\% & 82.06\% & 82.97\% & 42.01 & 40.41 & 19.31 & 18.27 \\ 
& (9, 9, 5, 1) &    24 & 0.09\% & 0.69\% & 81.92\% & 82.70\% & 41.99 & 39.92 & 24.27 & 13.83 \\ 
& (16, 16, 7, 1) &    40 & 0.08\% & 0.66\% & 82.66\% & 83.39\% & 41.99 & 40.34 & 23.09 & 14.58 \\ 
& (4, 4, 1, 1) &    40 & 0.12\% & 0.83\% & 81.33\% & 82.28\% & 42.00 & 39.75 & 19.85 & 18.40 \\ 
\\
$\boldsymbol{\mu}_B$ & (8, 8, 3, 1) &    40 & 0.92\% & 12.00\% & 70.08\% & 83.00\% & 42.00 & 38.42 & 23.60 & 15.99 \\ 
& (9, 9, 1, 1) &    60 & 1.04\% & 12.62\% & 68.88\% & 82.55\% & 42.02 & 38.80 & 20.64 & 18.54 \\ 
& (9, 9, 5, 1) &    24 & 0.97\% & 11.76\% & 70.21\% & 82.94\% & 41.99 & 38.47 & 26.28 & 13.26 \\ 
& (16, 16, 7, 1) &    40 & 0.93\% & 11.74\% & 70.66\% & 83.32\% & 41.99 & 38.73 & 24.96 & 14.32 \\ 
& (4, 4, 1, 1) &    40 & 1.04\% & 12.59\% & 68.43\% & 82.06\% & 42.00 & 38.05 & 21.25 & 18.70 \\ 
\\
$\boldsymbol{\mu}_C$ & (8, 8, 3, 1) &    40 & 11.59\% & 33.44\% & 45.30\% & 90.33\% & 41.99 & 37.67 & 24.16 & 16.18 \\ 
& (9, 9, 1, 1) &    60 & 11.83\% & 33.74\% & 44.55\% & 90.13\% & 42.01 & 38.02 & 21.24 & 18.73 \\ 
& (9, 9, 5, 1) &    24 & 11.67\% & 33.84\% & 45.23\% & 90.74\% & 42.00 & 37.97 & 27.36 & 12.68 \\ 
& (16, 16, 7, 1) &    40 & 11.42\% & 33.64\% & 45.60\% & 90.66\% & 41.99 & 38.01 & 25.72 & 14.28 \\ 
& (4, 4, 1, 1) &    40 & 11.69\% & 33.67\% & 44.67\% & 90.03\% & 41.99 & 37.27 & 21.90 & 18.83 \\ 
	\bottomrule
\end{tabular}
\end{table}

\section{A case study}
\label{s:real}

Examining Everolimus in a Study of Tuberous Sclerosis Complex (EXIST-3) evaluated two dosing regimens of adjunctive everolimus compared with placebo for treatment-resistant focal-onset seizures in tuberous sclerosis complex. \citep{fre2016} In this phase 3, randomised, double-blinded, placebo-controlled study, eligible patients were equally and randomly assigned (1:1:1) to receive placebo, 3-7 ng/mL everolimus (low exposure) and 9-15 ng/mL (high exposure) everolimus with block randomization (block size of six). The primary endpoint was the proportion of patients achieving at least $50\%$ reduction in seizure frequency during a 12-week maintenance period, and therefore a larger proportion corresponds to a better outcome. A Bonferroni procedure was used to ensure an one-sided FWER at $2.5\%$. 

In the absence of previous dose-finding studies in this indication, the sample size of this phase 3 trial was determined to provide at least $90\%$ power to detect a difference in response rate from $15\%$ on placebo to $35\%$ in each of the two everolimus treatment groups based on literature review. \citep{fre2016} With limited prior knowledge to distinguish the efficacy of two treatment groups, equal randomization is a reasonable choice for this type of non-adaptive confirmatory trial. After completing this study, the results revealed that the response rate was $15.1\%$ with placebo compared with $28.2\%$ ($p=0.0077$) for low exposure everolimus and $40.0\%$ ($p<0.0001$) for high-exposure by using Cochran-Mantel-Haenszel Chi-square tests stratified by four age subgroups. More subjects could have been recruited to the high-exposure group due to a better efficacy profile and the fact that two doses have similar safety profiles. \citep{fre2016} In the remainder of this section, we re-design this non-adaptive design by our proposed RABR to assign more subjects to the better performing treatment group, and to reach the target sample size. 

For demonstration purposes, we simulate the trial with instant response based on the above underlying response rates with no stratification factor and a total sample size of $180$. Following notations in Section \ref{s:sim}, we use $D_1$ to denote the low exposure group, $D_2$ for the high exposure group. The selected treatment group $S_1$ is the arm between $D_1$ and $D_2$ with a smaller $p$-value in the test of proportion at the final analysis stage, and $S_2$ is the other arm as the worse performing treatment group. Suppose that the target sample size for the placebo and the selected arm $S_1$ is $72$. With a burn-in size $M=90$ with equal randomization, our RABR targets the desired allocations using an adaptive randomization vector $\boldsymbol{r} = (7, 7, 1)$. In DBCD, we consider a generalization of the Neyman allocation \citep{hu2006} as the target allocation function, 
\begin{equation*}
	\frac{\sqrt{q^{(g)}\left\{1-q^{(g)}\right\}}}{\sum_{j=0}^2 \sqrt{q^{(j)}\left\{1-q^{(j)}\right\}}},
\end{equation*}
where $q^{(g)}$ denotes the response rate in group $g$, and $g=0$ corresponds to placebo, $g=1$ for $D_1$, and $g=2$ for $D_2$. In all three methods considered (fixed randomization design, RABR and DBCD), the test of proportion is conducted to compare each exposure group versus placebo. The Bonferroni procedure is used to adjust for multiplicity. 

We first study the type I error rate when using the unweighted proportion test under our RABR design. Based on simulations with $100,000$ iterations, the one-sided error rates for two pairwise comparisons are lower than $\alpha =2.5\%$ under a comprehensive scan of the null response rate. The conservativeness of type I error rate is the consistent with the findings of continuous endpoints in Table \ref{T:error_con}, and is proved in Supplemental Materials Section 3. The probability of rejecting at least one elementary null hypothesis by the Bonferroni procedure is also upper bounded by $\alpha$ (Figure \ref{F:bin_error}).

\begin{figure}
	\centerline{\includegraphics[width=6.3in]{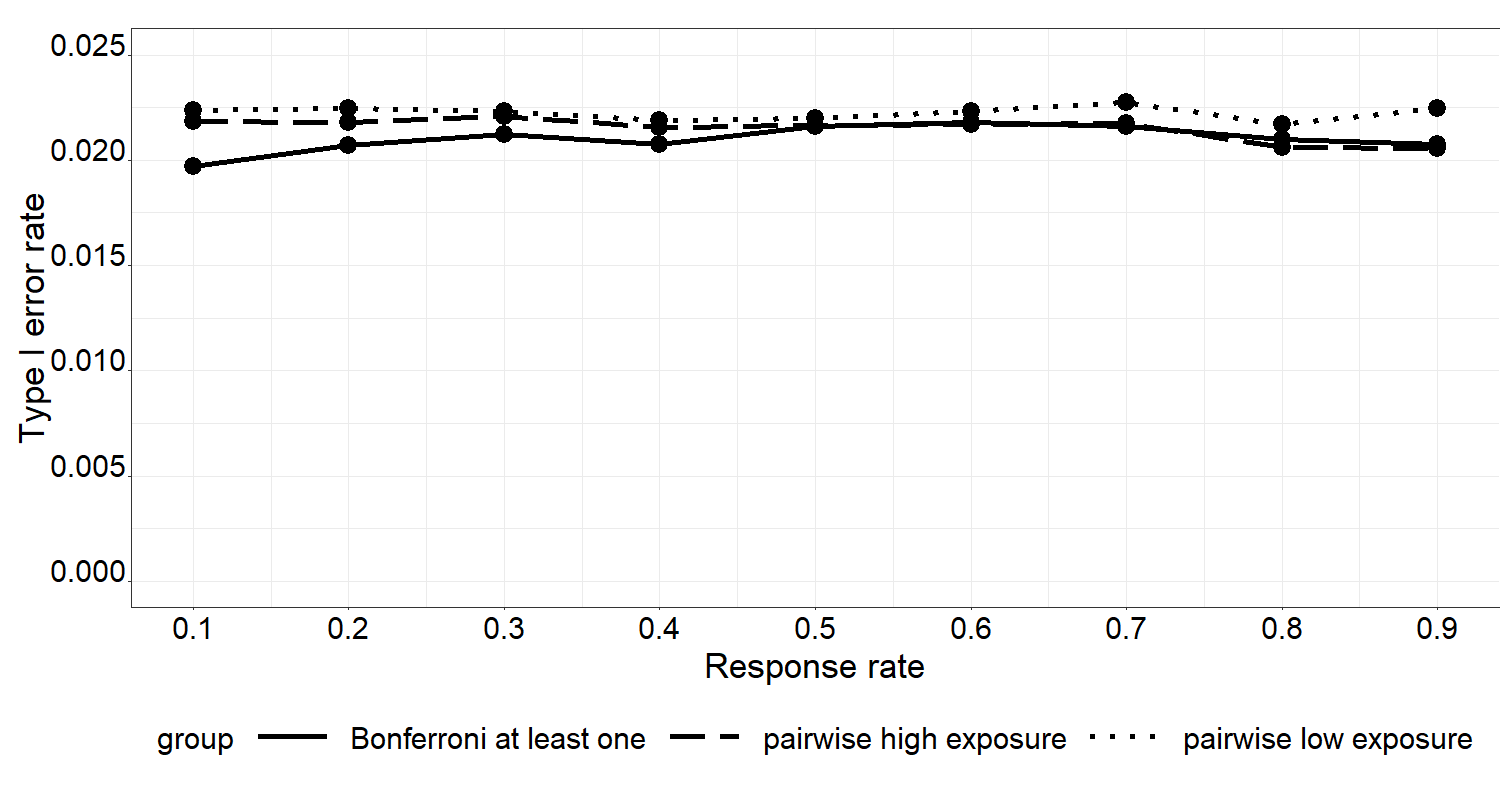}}
	\caption{RABR controls one-sided type I error rates not exceeding $2.5\%$ in pairwise comparisons and in claiming statistical significance for at least one dosing group with Bonferroni adjustment.}
	\label{F:bin_error}
\end{figure}

Next we evaluate power and operating characteristics of the three methods in Table \ref{T:bin_power}. Our RABR has the highest overall power at $86.22\%$, compared to the non-adaptive design with fixed randomization at $82.57\%$ and DBCD at $81.97\%$ (Table \ref{T:bin_power}). The probability of selecting and confirming the efficacy of the right dose ($D_2$; high exposure group) is also the highest ($78.40\%$). Another objective for two adaptive designs is to reach a target sample size of $72$ for both the placebo and the selected arm $S_1$. RABR achieves an average sample size of $72.02$ for the placebo and $69.93$ for $S_1$, which are closer to the target sample size than DBCD with $50.33$ for the placebo and $67.69$ for $S_1$. We observe that in general RABR has the largest variation in average sample size while fixed randomization has the smallest one based on standard deviation reported in parenthesis. 

\begin{table}[ht]
	\centering
	\caption{RABR achieves a higher power of selecting and confirming the efficacy of $D_2$, and reaches the target sample size more accurately than the design with fixed randomization and DBCD.}
	\label{T:bin_power}
	\begin{tabular}{lcccccc}
		\toprule
		Method & \multicolumn{3}{c}{Bonferroni adjusted power} & \multicolumn{3}{c}{Final sample size} \\ 
		& \multicolumn{3}{c}{of selecting and confirming} & \multicolumn{3}{c}{Mean (SD)} \\ 
		\cmidrule(lr){2-4} \cmidrule(lr){5-7} 
		& $D_1$ & $D_2$ & Overall & Placebo & $S_1$ & $S_2$  \\
		\midrule
RABR & \bf 7.82\% & \bf 78.40\% & \bf 86.22\% & \bf 72.02 (4.73) & \bf 69.93 (9.24) & \bf 38.05 (8.43) \\ 
Fixed randomization & 5.83\% & 76.75\% & 82.57\% & 60.02 (4.49) & 60.08 (4.47) & 59.91 (4.49) \\ 
DBCD & 4.87\% & 77.10\% & 81.97\% & 50.33 (5.88) & 67.69 (4.35) & 61.98 (4.63) \\ 
		\bottomrule
	\end{tabular}
\end{table}

\section{Concluding remarks}
\label{s:dis}

In this manuscript, we propose a practical Response Adaptive Block Randomization (RABR) design to adaptively assign more subjects to promising treatment groups based on accumulating interim data for multi-arm studies. Simulation studies show that our RABR robustly and accurately achieves the target allocations for the placebo and the selected treatment groups under varying underlying responses. This property, which is usually required in industry-sponsored clinical trials, makes our RABR more appealing in practice. The built-in block randomization feature eases the implementation of randomization procedures based on the IRT system. Moreover, we prove that the one-sided type I error rate from pairwise comparison of using conventional unweighted statistics in RABR is analytically controlled at a nominal level $\alpha$, which facilitates its application in confirmatory clinical trials, where FWER needs to be strongly controlled. It is worthy reemphasizing that this unweighted test is valid with small sample size despite the outcome driven nature of the randomization process. The unweighted statistics are appealing in multiple-arm studies where complicated multiplicity adjustment is needed. The unweighted test is also a good alternative to the weighted one if statistics in some stages cannot be computed due to incomplete data. Moderate multiplicity adjusted power gain is available not only in detecting at least one treatment group, but also in identifying and confirming the most efficacious dosing group. Our RABR provides an alternative solution to inferentially seamless Phase II/III design by integrating the identification and confirmation of treatment strategies into one single study. 

The generalization of Theorem \ref{theorem_error} to $m>3$ active treatment arms needs to be cautious, as the condition $r_1 \geq r_2 \geq ... \geq r_m \geq 0$ is not sufficient. Numerical calculation is required to evaluate the multivariate integral, which is beyond the scope of this manuscript. As a compromise, one can apply another set of constraints on the study design to guarantee an analytic type I error protection, for example $r_3 = ... = r_{m-1} = 1/(m+1)$ and $r_1 \geq r_2 \geq r_m \geq 0$. That is to say, the adaptive randomization is only employed on the best two and the worst performing arms. The other $m-3$ arms keep pre-specified randomization probabilities. 

\section*{Acknowledgment}

The authors would like to thank an anonymous associate editor and two anonymous reviewers for their constructive comments, which significantly improved this article. 

This work was supported by AbbVie Inc. AbbVie participated in the interpretation of data, writing, review, and approval of the content of this work. Tianyu Zhan, Ziqian Geng, and Yihua Gu are employees of AbbVie. Lu Cui is a former AbbVie employee and is currently employed by UCB Biosciences. Lanju Zhang is a former employee of AbbVie and is currently employed by Vertex Pharmaceuticals. Ivan S.F. Chan is a former AbbVie employee and is currently employed by Bristol Myers Squibb. All authors may own AbbVie stock.

\section*{Supplementary material}
\label{SM}
Supplementary Materials include the proof of three active treatment groups and of a binary endpoint. The R code to reproduce simulation studies and the case study is available at \url{https://github.com/tian-yu-zhan/RABR_simulations}. An R package $\texttt{RABR}$ is available on Comprehensive R Archive Network (CRAN) to evaluate operating characteristics of the proposed RABR via simulations.

\clearpage

%

\end{document}